\newtheorem{theorem}{Theorem}[section]
\newtheorem{definition}{Definition}[section]
\newtheorem{assumption}{Assumption}[section]
\newcommand{\tr}{\operatorname{tr}}
\renewcommand{\H}{\mathcal H}
\renewcommand{\P}{\mathbb P}
\renewcommand{\o}{\left(}
\renewcommand{\c}{\right)}
\newcommand{\Y}{\mathcal Y}
\newcommand{\thetahat}{\hat \theta}
\renewcommand{\L}{\mathcal L (\mathcal H) }
\newcommand{\D}{\mathcal D (\mathcal H) }
\title{\LARGE \bf
On Asymptotic Stability of  Non-Demolition Quantum Trajectories with Measurement Imperfections
}
\author{Maël Bompais$^{1}$ and Nina Amini$^{1}$ 
\thanks{$^{1}$Ma\"el Bompais and Nina H. Amini are with Laboratoire des Signaux et Syst\`{e}mes (L2S), CNRS-CentraleSup\'{e}lec-Universit\'{e} Paris-Sud, Universit\'{e} Paris-Saclay, 3, Rue Joliot Curie, 91190 Gif-sur-Yvette, France
  ({first name.last name@centralesupelec.fr}.)}}
\begin{document}

\maketitle
\thispagestyle{empty}
\pagestyle{empty}

\begin{abstract}
We consider the question of asymptotic stability of quantum trajectories undergoing quantum non-demolition imperfect measurement, that is to say the convergence of the estimated trajectory towards the true trajectory whose parameters and initial state are not necessarily known. We give conditions on the estimated initial state and regions of validity for the estimated parameters so that this convergence is ensured. We illustrate these results through numerical simulations on the physical example \cite{sayrin2011real} and discuss the asymptotic stability for a more  realistic general case where decoherence acts on the system. In this case, the evolution is described by new Kraus operators which do not satisfy the quantum non-demolition property.
\end{abstract}
\section{INTRODUCTION}
Open quantum systems are systems which are in interaction with an environment. Such an interaction causes the loss of information from the system to the environment or vice versa from the environment to the system, a phenomenon called decoherence, see e.g., \cite{breuer2002theory}. Open quantum systems can be observed through measurement processes, see e.g., \cite{barchielli2009quantum,wiseman2009quantum}. 

Direct measurement of a quantum systems will freeze its state, a phenomenon called Zeno effect (see e.g., \cite{barchielli2009quantum}). Instead, the quantum system can be measured indirectly. This means that the system becomes in interaction with a meter, usually light beams, then an observable of the scattered light is measured. This measure contains information about the system and also has random back-action on it. Quantum measurements have completely probabilistic nature. In discrete-time, the evolution is described by a Markov chain. In continuous-time, it is described by stochastic master equations driven by Wiener and Poisson processes depending on the type of detection; Wiener processes for homodyne or heterodyne and Poisson processes for photon counting detection, see e.g., \cite{walls2008bose}. Such stochastic processes are referred to as quantum trajectories (see \cite{carmichael2009open}) or quantum filters (see \cite{belavkin1995quantum,belavkin1992quantum,bouten2007introduction}). 

The perturbation induced by the measurement on the system can be better overcome thanks to the quantum non-demolition (QND) property which is introduced in \cite{braginsky1995quantum}. In essence, this condition means that it is possible to identify a basis for the system Hilbert space where every element of the basis remains unchanged by measurement. The QND measurement has been implemented by Serge Haroche's group to realize a first experiment of a feedback stabilizing photon number states inside a quantum electrodynamics cavity \cite{haroche2013nobel,raimond2006exploring,sayrin2011real}.


In real experiments, there are many sources of imperfection related for instance to the detector efficiency, the error in the results, etc. Quantum trajectories in presence of imperfections take more complicated forms. 
Robustness of quantum trajectories with respect to misspecification of parameters and initial states is fundamental. For instance, error correction codes and quantum feedback control play an important role to keep the behavior of trajectories robust to errors, see e.g., \cite{lidar2013quantum,wiseman1994quantuma}. 

The asymptotic stability of quantum trajectories, i.e., the convergence of the estimated trajectories towards the true ones, in the case of lack of knowledge about initial state, has been investigated in different papers with different approaches, whereas developing necessary conditions for generic quantum trajectories including measurement imperfections remains open.
In \cite{van2009stability}, the author establishes a sufficient condition for asymptotic stability. In \cite{bauer2013repeated,benoist2014large}, for the case of perfect QND measurement, the authors obtain asymptotic stability for discrete-time and continuous time trajectories. In \cite{amini2021asymptotic}, we characterize necessary conditions for asymptotic stability of quantum trajectories for perfect measurements. In \cite{somaraju2012design}, the authors design an optimal discrete-time filter containing different sources of imperfections and show that such the filter in average becomes closer to the ideal filter with correct initial state by proving the sub-martingale property of the fidelity (see \cite[Chapter 9]{nielsen2010quantum} for the definition of the fidelity). Later in \cite{amini2014stability}, this result is extended to the continuous-time filter.



Here we consider discrete-time quantum trajectories when measurements are QND and imperfect, more precisely we consider the optimal filter developed in \cite{somaraju2012design}. Firstly, we study a quantum state reduction property for these quantum trajectories where we obtain the rate of convergence. To this end, we adopt the approaches developed in \cite{bauer2013repeated} for the case of perfect QND measurements (see Section \ref{sec:initial}). Secondly, we assume that the initial state and the physical parameters appearing in the measurement operators are unknown, and we develop asymptotic stability analyses in these cases where we provide conditions on the estimated parameters and estimated initial state which ensure the asymptotic stability (see Section \ref{sec:asy}). This represents the first result on such kind of asymptotic stability including misspecification of parameters in addition to initial state. Finally, we consider the well-known example of the photon box \cite{sayrin2011real}, where we observe numerically such the asymptotic stability. Furthermore, we consider the case of decoherence induced by the environment in this example, and we illustrate numerically that even though new Kraus operators do not satisfy the QND condition, the asymptotic stability property with respect to initial state is still ensured (see Section \ref{sec:numeric}). This is encouraging for further investigations in this direction. 
\section{Selection of the pointer state for QND imperfect measurements}
\label{sec:initial}
\subsection{Quantum trajectories with measurement imperfections}
We consider a quantum system undergoing  discrete-time indirect measurement. The Hilbert space of the system is denoted by $\H \cong\mathbb  C^d$ and the state space is the set of density matrices
$$\D=\{ \rho \in \L |\, \rho=\rho^\dag,\,\rho \geq 0,\, \tr(\rho)=1\}.$$
With an ideal detector, the state of the system evolving over time satisfies the following Markov chain
$${\rho}_{n+1}=\frac{V_{i_n}{\rho}_{n}V_{i_n}^\dag}{\operatorname{tr}\left(V_{i_n}{\rho}_{n}V_{i_n}^\dag\right)},$$
depending on the random measurement result $i_n$ taking values in a finite set $\Y.$ The probability to detect the measurement result $i_n=i$ at time $n$ is given by $\tr\left(V_{i}{\rho}_{n}V_{i}^\dag\right).$ The operators $V_i$ are referred to as Kraus operators and satisfy $\sum_{i \in \Y} V_i^\dag V_i=I,$ with $I$ denoting the identity operator on $\H.$\\

Now suppose that the detector is biased and give some corrupted measurement results. The probability of error is supposed to be known and described by the correlation matrix $\eta$ such that $\eta_{i,j}$ is the probability to detect $i$ while an ideal detector would have given $j.$ In this case, the best estimation of the state of the system knowing the sequence of measurement results is given by the optimal filter derived in \cite{somaraju2012design}
\begin{equation}{\rho}_{n+1}=\frac{\boldsymbol{\Phi}_{i_{n}}\left({\rho}_{n}\right)}{\operatorname{tr}\left(\boldsymbol{\Phi}_{i_{n}}\left({\rho}_{n}\right)\right)}
\label{eq:dynamic}
\end{equation}
\noindent with $\boldsymbol{\Phi}_{i}(\rho)=\sum_{i\in \Y} \eta_{i,j} V_j \rho V_j^\dag$ and the measurement results are governed by the probability measure $\P$ such that $\P(i_n=i)=\operatorname{Tr}\left(\boldsymbol{\Phi}_{i_{n}}\left({\rho}_{n}\right)\right).$\\

\subsection{Asymptotic behaviour under QND imperfect measurement}
In this section, we aim to study the asymptotic behavior of the Markov chain \eqref{eq:dynamic} when the measurement satisfies a QND condition defined as follows:
\vspace{0.2cm}
\begin{definition}[QND measurement]
A measurement satisfies a non-demolition condition for a basis $\mathcal P$ if any element $\ket\alpha$ of $\mathcal P$ is not changed by the measurement. The states $\ket\alpha\bra\alpha$  are called pointer states.
\end{definition}
\vspace{0.2cm}
\noindent We consider from now that this condition is verified. For $\ket\alpha$ an element in $\mathcal P,$ we define
$$p(i|\alpha)=\tr \o \Phi_i(\ket{\alpha}\bra{\alpha}) \c$$
which correspond to the probability to observe $i$ when we perform a measure on the state $\ket{\alpha}\bra{\alpha}.$ Let us also define $$p_n(i)=\tr \o \boldsymbol{\Phi}_{i}(\rho_n) \c$$
which corresponds to the probability to observe $i$ at time $n.$

Throughout this paper, we suppose that the following non-degeneracy assumption holds true. 
\medskip
\begin{assumption}
For $\alpha \neq \beta \in \mathcal P$, there exists $i \in \Y$ such that:
$$  p(i|\alpha) \neq  p(i|\beta).$$
\label{ND}
\end{assumption}
In order to study the asymptotic behaviour of $\rho_n$, we will be interested in the quantities
$$ q_{\alpha}(n)=\tr (\ket{\alpha}\bra{\alpha} \rho_n),$$
which determines the population in the pointer state $\ket\alpha\bra\alpha$ at time $n.$ Note that $q_\alpha(n) \in [0,1]$ and $\sum_{\beta \in \mathcal P}q_\beta (n)=1.$\\
The following theorem establishes the selection of a pointer state.
\medskip
\begin{theorem}
There exists a random variable $\Upsilon$ taking values in the set $\mathcal P$ such that:
$$\lim\limits_{n \to \infty} q_{\Upsilon}(n)=1 ~~a.s.,$$
equivalently
$$\lim\limits_{n \to \infty} \rho_n = \ket{\Upsilon}\bra{\Upsilon} ~~a.s.$$
Moreover, $\P(\Upsilon=\alpha)=q_{\alpha}(0).$ 
\end{theorem}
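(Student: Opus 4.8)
The plan is to use the QND structure to turn each pointer population $q_\alpha(n)$ into a bounded martingale, deduce its almost sure convergence, and then invoke the non-degeneracy Assumption~\ref{ND} to force the limit to take only the values $0$ and $1$.

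First I would extract the algebraic consequences of the QND hypothesis. Since each $\ket\alpha\bra\alpha$ (with $\ket\alpha\in\mathcal P$) is a fixed point of every map $\rho\mapsto V_j\rho V_j^\dag/\tr(V_j\rho V_j^\dag)$, every $\ket\alpha$ is an eigenvector of $V_j$ for each $j\in\Y$, say $V_j\ket\alpha=c_{j,\alpha}\ket\alpha$; hence $V_j=\sum_{\alpha\in\mathcal P}c_{j,\alpha}\ket\alpha\bra\alpha$ is diagonal in the pointer basis (which is orthonormal, as is implicit in the relation $\sum_{\beta\in\mathcal P}q_\beta(n)=1$), so that $\boldsymbol\Phi_i(\ket\alpha\bra\alpha)=p(i|\alpha)\ket\alpha\bra\alpha$ with $p(i|\alpha)=\sum_j\eta_{i,j}|c_{j,\alpha}|^2$ and, using $\sum_j V_j^\dag V_j=I$, $p_n(i)=\tr(\boldsymbol\Phi_i(\rho_n))=\sum_{\alpha\in\mathcal P}q_\alpha(n)\,p(i|\alpha)$. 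A one-line computation then gives $q_\alpha(n+1)=\dfrac{p(i_n|\alpha)}{p_n(i_n)}\,q_\alpha(n)$; since $\eta$ is stochastic ($\sum_i\eta_{i,j}=1$) and $\sum_j|c_{j,\alpha}|^2=1$ we get $\sum_{i\in\Y}p(i|\alpha)=1$, whence $\E[q_\alpha(n+1)\mid\mathcal F_n]=\sum_{i}p_n(i)\,\frac{p(i|\alpha)}{p_n(i)}q_\alpha(n)=q_\alpha(n)$. Thus $(q_\alpha(n))_n$ is a $[0,1]$-valued martingale; it converges a.s. and in $L^1$ to a limit $q_\alpha(\infty)$, with $\E[q_\alpha(\infty)]=q_\alpha(0)$.

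Next I would identify this limit. Boundedness of the martingale gives $\sum_n\E[\operatorname{Var}(q_\alpha(n+1)\mid\mathcal F_n)]=\E[q_\alpha(\infty)^2]-q_\alpha(0)^2<\infty$, hence $\operatorname{Var}(q_\alpha(n+1)\mid\mathcal F_n)\to0$ a.s.; and an explicit evaluation yields $\operatorname{Var}(q_\alpha(n+1)\mid\mathcal F_n)=q_\alpha(n)^2\big(\sum_{i}\tfrac{p(i|\alpha)^2}{p_n(i)}-1\big)$ (the sum over $i$ with $p_n(i)>0$, for which $p(i|\alpha)=0$ as soon as $q_\alpha(n)>0$), i.e. $q_\alpha(n)^2$ times a $\chi^2$-type divergence between $p(\cdot|\alpha)$ and $p_n(\cdot)$ that is nonnegative by Cauchy--Schwarz and vanishes only when $p(\cdot|\alpha)=p_n(\cdot)$. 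On the event $\{q_\alpha(\infty)>0\}$ one has $q_\alpha(n)^2\to q_\alpha(\infty)^2>0$ and $p_n(i)\to p_\infty(i):=\sum_\beta q_\beta(\infty)p(i|\beta)$, so the divergence tends to $0$, which forces $p(\cdot|\alpha)=p_\infty(\cdot)$ — the indices with $p_\infty(i)=0$ require a short separate remark, namely that there $p(i|\alpha)=0$ as well, since $p_\infty(i)\ge q_\alpha(\infty)\,p(i|\alpha)$. Consequently, if two distinct elements $\alpha\ne\beta$ of $\mathcal P$ both had $q_\alpha(\infty)>0$ and $q_\beta(\infty)>0$, we would obtain $p(\cdot|\alpha)=p_\infty(\cdot)=p(\cdot|\beta)$, contradicting Assumption~\ref{ND}. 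Hence almost surely at most one index has a positive limit, and since $\sum_{\alpha\in\mathcal P}q_\alpha(\infty)=1$ exactly one does, with value $1$; define $\Upsilon$ to be that (random) index, so that $q_\alpha(\infty)=\1_{\{\Upsilon=\alpha\}}$.

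It remains to translate this into the stated conclusions. The assertion $\lim_n q_\Upsilon(n)=1$ a.s. is precisely $q_\Upsilon(\infty)=1$; its equivalence with $\rho_n\to\ket\Upsilon\bra\Upsilon$ is elementary — the forward direction is continuity of $\rho\mapsto\bra\Upsilon\rho\ket\Upsilon$, and the converse follows because $\rho_n\ge0$, $\tr\rho_n=1$ and $\bra\Upsilon\rho_n\ket\Upsilon\to1$ force every other matrix entry to vanish via $|\bra\Upsilon\rho_n\ket\psi|^2\le\bra\Upsilon\rho_n\ket\Upsilon\,\bra\psi\rho_n\ket\psi$. Finally, the law of $\Upsilon$ comes straight from the martingale identity: $\P(\Upsilon=\alpha)=\E[\1_{\{\Upsilon=\alpha\}}]=\E[q_\alpha(\infty)]=q_\alpha(0)$. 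I expect the main obstacle to be the middle step, upgrading ``the martingale converges'' to ``its limit is extreme'': one must handle carefully the indices $i$ for which $p_n(i)\to0$ and pass to the limit in the $\chi^2$ divergence so as to extract the identity $p(\cdot|\alpha)=p_\infty(\cdot)$ on $\{q_\alpha(\infty)>0\}$, after which Assumption~\ref{ND} closes the argument.
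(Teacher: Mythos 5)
Your proof is correct and follows exactly the route the paper itself indicates (a direct adaptation of the Bauer--Bernard argument): the martingale property of $q_\alpha(n)$, almost-sure convergence, vanishing conditional variance forcing $p(\cdot|\alpha)=p_\infty(\cdot)$ on $\{q_\alpha(\infty)>0\}$, and Assumption~\ref{ND} excluding two simultaneously positive limits, with $\P(\Upsilon=\alpha)=q_\alpha(0)$ from constancy of the mean. The paper only sketches this; your write-up supplies the details (including the treatment of outcomes with $p_n(i)\to 0$) correctly.
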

\medskip
The proof is a direct adaptation of the arguments applied in \cite{bauer2011convergence} for perfect measurement. This uses the martingale property of $q_\alpha(n),$ still valid for imperfect measurements, and Assumption \ref{ND}, to state that $q_\alpha(\infty)$ and $q_\beta(\infty)$ cannot be non-zero simultaneously for $\alpha \neq \beta.$

\medskip

Now let us set the following notation $\P_\alpha (.)=p(.|\alpha).$ 
The next theorem precises the speed of selection of the pointer state $\Upsilon$.
\medskip
\begin{theorem}
Let $\alpha$ be such that $q_{\alpha}(0) \neq 0.$ Then:
$$\lim\limits_{n \to \infty} \frac{1}{n}\ln \o \frac{q_{\alpha}(n)}{ q_{\Upsilon}(n)} \c = - S(\P_{\Upsilon} || \P_{\alpha}) ~~\textrm{a.s.,} $$
where $S(\P_{\Upsilon} || \P_{\alpha})=\sum_i  p(i|\Upsilon)\ln(\frac{ p(i|\Upsilon)}{ p(i|\alpha)})$ is the relative entropy between the probability distributions $\P_{\Upsilon}$ and $\P_{\alpha}$.
\end{theorem}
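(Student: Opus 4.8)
The plan is to turn the claim into a classical strong law of large numbers by combining the non‑demolition structure with a change of probability measure. First I would record a scalar recursion for $q_\alpha(n)$. Since the measurement is non‑demolition for $\mathcal P$, each Kraus operator $V_j$ leaves every pointer state invariant, so $V_j\ket\alpha$ is proportional to $\ket\alpha$ for all $\alpha\in\mathcal P$; as $\mathcal P$ is an orthonormal basis, each $V_j$ is diagonal in $\mathcal P$, and writing $V_j\ket\alpha=\lambda_{j,\alpha}\ket\alpha$ one gets $\bra\alpha\boldsymbol{\Phi}_i(\rho)\ket\alpha=\o\sum_j\eta_{i,j}|\lambda_{j,\alpha}|^2\c\bra\alpha\rho\ket\alpha=p(i|\alpha)\,\bra\alpha\rho\ket\alpha$ for every $\rho\in\D$. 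Substituting $\rho=\rho_n$ into \eqref{eq:dynamic} gives $q_\alpha(n+1)=\frac{p(i_n|\alpha)}{p_n(i_n)}\,q_\alpha(n)$, hence, for any $\gamma\in\mathcal P$ with $q_\gamma(0)\neq0$, the telescoping identity
$$\frac1n\ln\!\o\frac{q_\alpha(n)}{q_\gamma(n)}\c=\frac1n\ln\!\o\frac{q_\alpha(0)}{q_\gamma(0)}\c+\frac1n\sum_{k=0}^{n-1}\ln\!\o\frac{p(i_k|\alpha)}{p(i_k|\gamma)}\c.$$
The first term vanishes as $n\to\infty$ (this is where $q_\alpha(0)\neq0$ enters), so everything reduces to the almost sure asymptotics of the additive functional $\frac1n\sum_{k<n}\ln\frac{p(i_k|\alpha)}{p(i_k|\gamma)}$ of the observation sequence.

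To evaluate it I would change measure. Fix $\gamma\in\mathcal P$ with $q_\gamma(0)\neq0$ and let $(\F_n)_n$ be the natural filtration of $(i_k)_k$. By the martingale property of $q_\gamma(n)$ noted after the preceding theorem, $M_n:=q_\gamma(n)/q_\gamma(0)$ is a nonnegative $\P$‑martingale, and therefore defines a consistent family of probability measures $\Q_\gamma$ on $(\F_n)_n$ via $d\Q_\gamma/d\P|_{\F_n}=M_n$. Two facts must then be checked. First, under $\Q_\gamma$ the observations $(i_k)_k$ are i.i.d.\ with law $\P_\gamma$: a one‑line computation using the recursion and $\P(i_n=i\mid\F_n)=p_n(i)$ gives $\Q_\gamma(i_n=i\mid\F_n)=\E_\P[M_{n+1}\1_{\{i_n=i\}}\mid\F_n]/M_n=p(i|\gamma)$. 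Second, $\Q_\gamma$ coincides with the conditional law $\P(\,\cdot\mid\Upsilon=\gamma)$: by martingale convergence $M_n\to q_\gamma(\infty)/q_\gamma(0)$, and the preceding theorem together with Assumption~\ref{ND} force $q_\gamma(\infty)=\1_{\{\Upsilon=\gamma\}}$, so the Radon--Nikodym derivative extends to $\F_\infty$ as $\1_{\{\Upsilon=\gamma\}}/q_\gamma(0)$; since $\P(\Upsilon=\gamma)=q_\gamma(0)$, this is exactly conditioning on $\{\Upsilon=\gamma\}$.

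With these two facts the classical strong law of large numbers applies under $\Q_\gamma$: the finitely many values $\ln\frac{p(i|\alpha)}{p(i|\gamma)}$ have bounded positive part, so $\E_{\Q_\gamma}\!\big[\ln\tfrac{p(i_0|\alpha)}{p(i_0|\gamma)}\big]=\sum_i p(i|\gamma)\ln\tfrac{p(i|\alpha)}{p(i|\gamma)}=-S(\P_\gamma||\P_\alpha)\in[-\infty,0]$ is well defined, and the SLLN (in the version allowing an infinite negative mean) yields $\frac1n\sum_{k<n}\ln\frac{p(i_k|\alpha)}{p(i_k|\gamma)}\to-S(\P_\gamma||\P_\alpha)$, $\Q_\gamma$‑a.s. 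Combining with the telescoping identity gives $\frac1n\ln(q_\alpha(n)/q_\gamma(n))\to-S(\P_\gamma||\P_\alpha)$ $\Q_\gamma$‑a.s., i.e.\ $\P$‑a.s.\ on $\{\Upsilon=\gamma\}$. Finally, the events $\{\Upsilon=\gamma\}$ with $q_\gamma(0)\neq0$ exhaust the sample space up to a $\P$‑null set, and on $\{\Upsilon=\gamma\}$ one has $S(\P_\gamma||\P_\alpha)=S(\P_\Upsilon||\P_\alpha)$; patching the cases together proves the theorem.

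I expect the main obstacle to be the second of the two facts above, namely identifying the tilted measure $\Q_\gamma$ with the conditional law $\P(\,\cdot\mid\Upsilon=\gamma)$: this rests on the $\{0,1\}$‑valued limit of $q_\gamma(n)$ (a by‑product of the preceding theorem and Assumption~\ref{ND}) and on the identity $\P(\Upsilon=\gamma)=q_\gamma(0)$. A minor additional point is the bookkeeping of the degenerate case $S(\P_\gamma||\P_\alpha)=+\infty$, where some $i^*$ satisfies $p(i^*|\gamma)>0=p(i^*|\alpha)$: then $q_\alpha$ reaches $0$ at the first ($\Q_\gamma$‑a.s.\ finite) visit of $i^*$ and stays there, so $\ln(q_\alpha(n)/q_\gamma(n))=-\infty$ for all large $n$, in agreement with $-S=-\infty$. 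Alternatively one could avoid the change of measure and work directly under $\P$ with the Doob decomposition of $\sum_{k<n}\ln\frac{p(i_k|\alpha)}{p(i_k|\Upsilon)}$, using $p_n(i)\to p(i|\Upsilon)$ (continuity of $\boldsymbol{\Phi}_i$ and the preceding theorem) together with a martingale strong law for the fluctuation term; but the non‑adaptedness of $\Upsilon$ makes the conditioning cumbersome, so the change‑of‑measure route seems cleaner.
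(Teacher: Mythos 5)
Your proposal is correct and follows essentially the same route as the paper: establish the multiplicative recursion $q_\alpha(n+1)=q_\alpha(n)\,p(i_n|\alpha)/p_n(i_n)$ from the QND property, telescope the log-ratio, and apply the strong law of large numbers under the conditional measures $\Q_\gamma=\P(\cdot\mid\Upsilon=\gamma)$ before patching via $\P=\sum_\gamma q_\gamma(0)\Q_\gamma$. The only difference is that you supply the details the paper delegates to its reference, namely the martingale change-of-measure construction of $\Q_\gamma$ and the verification that the observations are i.i.d.\ with law $\P_\gamma$ under it, which is a welcome (and correct) addition rather than a departure.
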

\medskip
\begin{proof}
First we start by showing that the following recurrence 
\begin{equation}
q_{\alpha}(n+1)=q_{\alpha}(n)\frac{ p(i_n|\alpha)}{ p_n(i_n)} 
\label{eq:recurrence}
\end{equation}
holds for $q_{\alpha}(n).$
We note that 
\begin{align*}
q_{\alpha}(n+1)&=\tr\left(\ket\alpha\bra\alpha\rho_{n+1}\right)\\
&=\tr\left(\ket\alpha\bra\alpha\frac{\boldsymbol\Phi_{i_n}(\rho_n)}{\tr\left(\boldsymbol\Phi_{i_n}(\rho_n)\right)}\right)\\
&=\frac{\tr\left(\boldsymbol\Phi_{i_n}^\dag(\ket\alpha\bra\alpha)\rho_n\right)}{\tr\left(\boldsymbol\Phi_{i_n}(\rho_n)\right)}\\
&=\frac{\tr\left(\tr(\boldsymbol\Phi_{i_n}^\dag(\ket\alpha\bra\alpha))\ket\alpha\bra\alpha\rho_n\right)}{\tr\left(\boldsymbol\Phi_{i_n}(\rho_n)\right)},
\end{align*}
where for the last equality, we use the QND property of measurement. Then we get the following 
\begin{align*}
q_{\alpha}(n+1)=\frac{\tr(\boldsymbol\Phi_{i_n}^\dag(\ket\alpha\bra\alpha))\tr\left(\ket\alpha\bra\alpha\rho_n\right)}{\tr\left(\boldsymbol\Phi_{i_n}(\rho_n)\right)},
\end{align*}
which gives the desired recurrence relation \eqref{eq:recurrence}. The rest of the proof uses similar  arguments as in  \cite[Section 4.3]{bauer2013repeated}. For the sake of readability, we give the main parts of these arguments. 

First we note that the relation \eqref{eq:recurrence} can be rewritten as follows
\begin{equation}
q_{\alpha}(n+1)=q_{\alpha}(0)\frac{\prod_{k=1}^n p(i_k|\alpha)}{\sum_{\beta}q_\beta(0)\prod_{k=1}^n p(i_k|\beta)}.
\end{equation}
Now we can evaluate the following ratio for $\alpha$ and $\zeta$ in $\mathcal P$
\begin{align*}
&\frac{1}{n}\ln \o \frac{q_{\alpha}(n)}{ q_{\zeta}(n)}\c\\
&=\frac 1 n\ln \o \frac{q_{\alpha}(0)}{ q_{\zeta}(0)}\c+\frac{1}{n}\sum_{k=1}^n\ln( p(i_k|\alpha))-\ln( p(i_k|\zeta)).
\end{align*}
Then it is sufficient to use the fact that the results of measurements are identically independently distributed under the new probability $\mathbb Q_\gamma$ defined by $\mathbb Q_\gamma(.)=\P(.|\Upsilon=\gamma),$ we can conclude the following relation by the law of large number 
\begin{equation}
\mathbb Q_\zeta\o\lim\limits_{n \to \infty} \frac{1}{n}\ln \o \frac{q_{\alpha}(n)}{ q_{\zeta}(n)}\c =- S(\P_{\zeta} || \P_{\alpha})\c=1.\,\quad
\label{eq:prob}
\end{equation}
However we aim to show the following 
$$\P\o \lim\limits_{n \to \infty} \frac{1}{n}\ln \o \frac{q_{\alpha}(n)}{ q_{\Upsilon}(n)} \c=- S(\P_{\Upsilon} || \P_{\alpha})\c=1.$$
Now we remark that $$\P(.)=\sum_\gamma \P(\Upsilon=\gamma)\P(.|\Upsilon=\gamma)=\sum_\gamma q_\gamma(0)\mathbb Q_\gamma(.)$$ Then we have 
\begin{align*}
&\P\o \lim\limits_{n \to \infty} \frac{1}{n}\ln \o \frac{q_{\alpha}(n)}{q_{\Upsilon}(n)} \c=- S(\P_{\Upsilon} || \P_{\alpha})\c\\
&=\sum_{\gamma}q_\gamma(0)\mathbb Q_\gamma\o \lim\limits_{n \to \infty} \frac{1}{n}\ln \o \frac{q_{\alpha}(n)}{ q_{\gamma}(n)} \c=- S(\P_{\gamma} || \P_{\alpha})\c\\
&=\sum_{\gamma}q_\gamma(0)=1.
\end{align*}
\end{proof}
\section{Asymptotic stability}
\label{sec:asy}
In this section, we show asymptotic stability of the Markov chain \eqref{eq:dynamic} for two cases, first in absence of knowledge about initial state and second for ignorance of both initial state and physical parameters.
\subsection{Unknown initial state}
When the initial state $\rho_0$ is unknown, a natural way to construct an estimation $\hat\rho_n$ of the true trajectory $\rho_n$ is to fix an arbitrary estimated initial state $\hat\rho_0$ and make it evolve according to the measurement results $i_n$ at our disposal (emitted by the true system $\rho_n$), with the recursive relation
\begin{equation}{\hat\rho}_{n+1}=\frac{\boldsymbol{\Phi}_{i_{n}}\left({\hat\rho}_{n}\right)}{\operatorname{Tr}\left(\boldsymbol{\Phi}_{i_{n}}\left({\hat\rho}_{n}\right)\right)}.
\label{eq:dynamicest}
\end{equation}
Similarly to the previous study, we define $ \hat q_{\alpha}(n)=\tr (\ket{\alpha}\bra{\alpha} \hat\rho_n).$ The following theorem shows the asymptotic stability with respect to the initial state.
\medskip
\begin{theorem}
Let $ \alpha $ be such that $\hat q_{\alpha}(0) \neq 0$. Then
$$\lim\limits_{n \to \infty} \frac{1}{n}\ln \o \frac{\hat q_{\alpha}(n)}{\hat q_{\Upsilon}(n)} \c = - S(\P_{\Upsilon} || \P_{\alpha})  $$
almost surely, where $\Upsilon$ is the random variable designating the pointer state selected by the trajectory $\rho_n$. Consequently, if for all $\alpha$, $q_\alpha(0)\neq 0$, then
$$ \lim\limits_{n \to \infty} \hat q_{\Upsilon}(n)=1,$$
equivalently
$$\lim\limits_{n \to \infty} \hat\rho_n =\ket{\Upsilon}\bra{\Upsilon}~a.s.$$

\end{theorem}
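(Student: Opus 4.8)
The plan is to run the argument in close parallel to the proof of the preceding theorem, the only genuinely new ingredient being that $\hat\rho_n$ is propagated by the recursion \eqref{eq:dynamicest} while the outcomes $i_n$ are still emitted by the true system and hence distributed under $\P$. First I would reproduce verbatim the computation of that proof — which used only the QND property of the $\boldsymbol\Phi_i$ (equivalently of the $V_j$) and never the particular initial condition — to obtain the recurrence
\begin{equation*}
\hat q_\alpha(n+1)=\hat q_\alpha(n)\,\frac{p(i_n|\alpha)}{\hat p_n(i_n)},\qquad \hat p_n(i):=\tr\o\boldsymbol\Phi_i(\hat\rho_n)\c .
\end{equation*}

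The key observation is that when one forms a ratio of two populations, the normalisation $\hat p_n(i_n)$ is common to numerator and denominator and cancels. Thus for $\zeta\in\mathcal P$ with $\hat q_\zeta(0)\neq0$,
\begin{equation*}
\frac1n\ln\o\frac{\hat q_\alpha(n)}{\hat q_\zeta(n)}\c=\frac1n\ln\o\frac{\hat q_\alpha(0)}{\hat q_\zeta(0)}\c+\frac1n\sum_{k=1}^{n}\o\ln p(i_k|\alpha)-\ln p(i_k|\zeta)\c ,
\end{equation*}
an expression that no longer involves $\hat\rho_n$ at all, only the estimated initial populations and the true outcome sequence. Under $\mathbb Q_\gamma(\cdot)=\P(\cdot\mid\Upsilon=\gamma)$ the outcomes are i.i.d.\ with law $\P_\gamma$, so the strong law of large numbers gives, under $\mathbb Q_\zeta$, convergence of the sum to $\E_{\P_\zeta}[\ln p(\cdot|\alpha)]-\E_{\P_\zeta}[\ln p(\cdot|\zeta)]=-S(\P_\zeta || \P_\alpha)$ while the first term vanishes; hence $\mathbb Q_\zeta\bigl(\lim_n\tfrac1n\ln(\hat q_\alpha(n)/\hat q_\zeta(n))=-S(\P_\zeta || \P_\alpha)\bigr)=1$.

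I would then decompose $\P(\cdot)=\sum_\gamma q_\gamma(0)\,\mathbb Q_\gamma(\cdot)$ exactly as at the end of the previous proof to upgrade this to $\P\bigl(\lim_n\tfrac1n\ln(\hat q_\alpha(n)/\hat q_\Upsilon(n))=-S(\P_\Upsilon || \P_\alpha)\bigr)=\sum_\gamma q_\gamma(0)=1$, which is the first claim. For the consequence, apply the first part to every $\alpha\neq\Upsilon$: by Assumption~\ref{ND}, $\P_\Upsilon\neq\P_\alpha$ for $\alpha\neq\Upsilon$, and the relative entropy between distinct distributions is strictly positive, so $\hat q_\alpha(n)/\hat q_\Upsilon(n)\to0$ almost surely; since $\hat q_\Upsilon(n)\le1$ this forces $\hat q_\alpha(n)\to0$ for every $\alpha\neq\Upsilon$, whence $\hat q_\Upsilon(n)=1-\sum_{\alpha\neq\Upsilon}\hat q_\alpha(n)\to1$, i.e.\ $\hat\rho_n\to\ket\Upsilon\bra\Upsilon$ a.s.

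The main subtlety — the only place the argument differs in spirit from the perfect-knowledge case — is that the selected pointer state $\Upsilon$ is a functional of the \emph{true} trajectory whereas $\hat q_\Upsilon(n)$ is built from the \emph{estimated} one; the resolution is precisely the cancellation of $\hat p_n(i_n)$ noted above, which decouples the log-likelihood-ratio dynamics from the (possibly wrong) initial condition, so that the i.i.d.\ structure of the outcomes under $\mathbb Q_\gamma$ can still be invoked. One must also keep $\hat q_\Upsilon(n)$ positive, which is why the estimated initial state is required to have nonzero population on every pointer state that $\rho_0$ can select; absent this the ratio would be ill-defined along $\Upsilon$.
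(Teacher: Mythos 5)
Your proposal is correct and follows essentially the same route as the paper, which simply notes that the recurrence \eqref{eq:recurrence} carries over to $\hat q_\alpha(n)$ (with normalisation $\tr(\boldsymbol\Phi_{i_n}(\hat\rho_n))$) and then defers to the argument of the preceding theorem. You have merely made explicit the two points the paper leaves implicit --- the cancellation of the normalisation in the population ratio and the fact that the outcomes remain i.i.d.\ under $\mathbb Q_\gamma$ because they are emitted by the true system --- so there is nothing to add.
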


\vspace{0.3cm}
 In other words, the estimated trajectory $\hat\rho_n$ selects the same pointer state as the true trajectory $\rho_n$.
 A simple way to fulfill the condition $q_\alpha(0)\neq 0$ for all $\alpha$ is to choose $\hat\rho_0$ as a full rank state.
 
\medskip
 \begin{proof}
 The proof is principally based on showing the same recurrence property announced in \eqref{eq:recurrence} for estimated trajectory \eqref{eq:dynamicest}, i.e., for $\hat q_{\alpha}(n),$ which can be obtained in the same manner as before. The rest of the proof is similar to \cite[Section 4.3]{bauer2013repeated}.   
 \end{proof}
\subsection{Unknown initial state $\&$ unknown parameters}
Suppose now that the Kraus maps $\boldsymbol\Phi_i$ depend on an unknown parameter $\theta$, we write them $\boldsymbol\Phi_i^\theta$, we then have a new estimation of the trajectory, which is based on an estimation $\hat \theta$ of $\theta$; and again evolving according to the measurement results detected as
$${\hat\rho}_{n+1}^{\thetahat}=\frac{\boldsymbol{\Phi}_{i_{n}}^{\thetahat}\left({\hat\rho}_{n}^{\thetahat}\right)}{\operatorname{Tr}\left(\boldsymbol{\Phi}_{i_{n}}^{\thetahat}\left({\hat\rho}_{n}^{\thetahat}\right)\right)}.$$
We define the quantities $\hat q_{\alpha}^{\hat\theta}(n)=\tr (\ket{\alpha}\bra{\alpha} \hat\rho_n^{\hat\theta}).$ We also set the notation $\P_\alpha^{\hat\theta} (.)=p^{\hat\theta}(.|\alpha)$ where $p^{\hat\theta}(i|\alpha)=\tr \o \Phi_i^{\hat\theta}(\ket{\alpha}\bra{\alpha}) \c$. The following theorem establishes the asymptotic stability of quantum trajectories in this case under an appropriate assumption.
\medskip
\begin{theorem}
Let $ \alpha $ be such that $\hat q_{\alpha}^{\hat \theta}(0) \neq 0$. Then
$$\lim\limits_{n \to \infty} \frac{1}{n}\ln \o \frac{\hat q_{\alpha}^{\hat \theta}(n)}{\hat q_{\Upsilon}^{\hat \theta}(n)} \c =S(\P_{\Upsilon}|\P_{\Upsilon}^{\hat\theta}) - S(\P_{\Upsilon}|\P_{\alpha}^{\hat\theta})~~a.s.,$$
where $\Upsilon$ is the random variable designating the pointer state selected by the trajectory $\rho_n$. Therefore, if for all $\alpha$, $q_\alpha^{\hat \theta}(0)\neq 0,$ and
\begin{equation}
\underset{\beta \in \mathcal P}{argmin~} S(\P_{\alpha}|\P_{\beta}^{\hat\theta})=\alpha,
\label{eq:argmin}
\end{equation}
then
$$\lim\limits_{n \to \infty}q_{\Upsilon}^{\hat\theta}(n)=1~~a.s,$$
equivalently
$$\lim\limits_{n \to \infty}\rho_n^{\hat\theta}=\ket{\Upsilon}\bra{\Upsilon}~~a.s.$$
\label{thm:entropy}
\end{theorem}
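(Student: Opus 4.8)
The plan is to follow the same line of reasoning as in the proofs of the previous two theorems, but now tracking the mismatch between the \emph{true} dynamics (governed by $\theta$) and the \emph{estimated} dynamics (governed by $\hat\theta$). First I would establish, exactly as in \eqref{eq:recurrence}, the recurrence
\begin{equation*}
\hat q_{\alpha}^{\hat\theta}(n+1)=\hat q_{\alpha}^{\hat\theta}(n)\,\frac{p^{\hat\theta}(i_n|\alpha)}{\hat p_n^{\hat\theta}(i_n)},
\end{equation*}
which holds because the QND property is a property of the operators $\boldsymbol\Phi_i^{\hat\theta}$ themselves and is assumed to hold for the estimated model as well; the derivation is the computation already carried out in the proof of the second theorem with $\boldsymbol\Phi_{i_n}$ replaced by $\boldsymbol\Phi_{i_n}^{\hat\theta}$. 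Telescoping this recurrence and forming the ratio for $\alpha,\zeta\in\mathcal P$ kills the denominator and yields
\begin{equation*}
\frac{1}{n}\ln\o\frac{\hat q_{\alpha}^{\hat\theta}(n)}{\hat q_{\zeta}^{\hat\theta}(n)}\c=\frac1n\ln\o\frac{\hat q_{\alpha}^{\hat\theta}(0)}{\hat q_{\zeta}^{\hat\theta}(0)}\c+\frac1n\sum_{k=1}^n\Big(\ln p^{\hat\theta}(i_k|\alpha)-\ln p^{\hat\theta}(i_k|\zeta)\Big).
\end{equation*}

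The crucial difference with the previous theorem is that the increments $\ln p^{\hat\theta}(i_k|\alpha)-\ln p^{\hat\theta}(i_k|\zeta)$ are functions of the \emph{estimated} probabilities but the randomness of $i_k$ is still governed by the \emph{true} law. Conditioning on $\{\Upsilon=\gamma\}$, i.e. working under $\mathbb Q_\gamma(\cdot)=\P(\cdot\,|\,\Upsilon=\gamma)$, the results $i_k$ are i.i.d.\ with law $\P_\gamma=p(\cdot|\gamma)$ (this is the same fact used in the proof of the second theorem, coming from the selection of the pointer state by $\rho_n$). Hence by the strong law of large numbers, under $\mathbb Q_\gamma$,
\begin{equation*}
\frac1n\sum_{k=1}^n\Big(\ln p^{\hat\theta}(i_k|\alpha)-\ln p^{\hat\theta}(i_k|\gamma)\Big)\xrightarrow[n\to\infty]{}\sum_i p(i|\gamma)\ln\frac{p^{\hat\theta}(i|\alpha)}{p^{\hat\theta}(i|\gamma)},
\end{equation*}
and rewriting the right-hand side as $\big(-S(\P_\gamma|\P_\gamma^{\hat\theta})+\sum_i p(i|\gamma)\ln p(i|\gamma)\big)-\big(-S(\P_\gamma|\P_\alpha^{\hat\theta})+\sum_i p(i|\gamma)\ln p(i|\gamma)\big)$ gives $S(\P_\gamma|\P_\gamma^{\hat\theta})-S(\P_\gamma|\P_\alpha^{\hat\theta})$. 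Taking $\zeta=\gamma$ in the displayed ratio, we get $\mathbb Q_\gamma$-almost surely $\frac1n\ln(\hat q_\alpha^{\hat\theta}(n)/\hat q_\gamma^{\hat\theta}(n))\to S(\P_\gamma|\P_\gamma^{\hat\theta})-S(\P_\gamma|\P_\alpha^{\hat\theta})$. Then the same decomposition $\P(\cdot)=\sum_\gamma q_\gamma(0)\mathbb Q_\gamma(\cdot)$ used at the end of the second theorem upgrades this to a statement under $\P$ with $\gamma$ replaced by the random variable $\Upsilon$, which is the first displayed limit of the theorem.

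For the consequence, fix the realization of $\Upsilon$; say $\Upsilon=\gamma$. The first part says, for every $\alpha$ with $\hat q_\alpha^{\hat\theta}(0)\neq0$ (in particular, since $q_\alpha^{\hat\theta}(0)\neq0$ for all $\alpha$, for every $\alpha$),
\begin{equation*}
\hat q_\alpha^{\hat\theta}(n)=\hat q_\gamma^{\hat\theta}(n)\exp\Big(n\big(S(\P_\gamma|\P_\gamma^{\hat\theta})-S(\P_\gamma|\P_\alpha^{\hat\theta})+o(1)\big)\Big).
\end{equation*}
Summing over $\alpha\in\mathcal P$ and using $\sum_\alpha\hat q_\alpha^{\hat\theta}(n)=1$ shows that $\hat q_\gamma^{\hat\theta}(n)$ is comparable, up to subexponential factors, to $\big(\sum_\alpha\exp(n(S(\P_\gamma|\P_\gamma^{\hat\theta})-S(\P_\gamma|\P_\alpha^{\hat\theta})))\big)^{-1}$; by assumption \eqref{eq:argmin} the unique minimizer of $\beta\mapsto S(\P_\gamma|\P_\beta^{\hat\theta})$ is $\beta=\gamma$, so every exponent $S(\P_\gamma|\P_\gamma^{\hat\theta})-S(\P_\gamma|\P_\alpha^{\hat\theta})$ with $\alpha\neq\gamma$ is strictly negative, forcing $\hat q_\gamma^{\hat\theta}(n)\to1$ and hence $\hat q_\alpha^{\hat\theta}(n)\to0$ for $\alpha\neq\gamma$; equivalently $\hat\rho_n^{\hat\theta}\to\ket\gamma\bra\gamma=\ket\Upsilon\bra\Upsilon$ almost surely. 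The main obstacle — and the place requiring care — is the need for the QND property to genuinely hold for the estimated operators $\boldsymbol\Phi_i^{\hat\theta}$ (so that the recurrence for $\hat q_\alpha^{\hat\theta}$ is valid and the same basis $\mathcal P$ of pointer states can be used for both models); if the misspecified model shares the same pointer basis this is automatic, and I would state this explicitly as the standing hypothesis behind the theorem, together with the implicit well-posedness condition that $p^{\hat\theta}(i|\alpha)>0$ whenever $p(i|\Upsilon)>0$ so that the relative entropies $S(\P_\Upsilon|\P_\alpha^{\hat\theta})$ are finite and the logarithms above are well defined.
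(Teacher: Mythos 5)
Your proof follows essentially the same route as the paper: establish the recurrence for $\hat q_{\alpha}^{\hat\theta}(n)$ via the QND property, telescope, apply the law of large numbers under $\mathbb Q_\gamma$ where the $i_k$ are i.i.d.\ with the \emph{true} law $p(\cdot|\gamma)$, rewrite the limit as $S(\P_{\Upsilon}|\P_{\Upsilon}^{\hat\theta})-S(\P_{\Upsilon}|\P_{\alpha}^{\hat\theta})$, and conclude from the negativity of the exponents under \eqref{eq:argmin}. Your added caveats (that the estimated model must share the pointer basis so the recurrence holds, and that $p^{\hat\theta}(i|\alpha)>0$ whenever $p(i|\Upsilon)>0$ so the relative entropies are finite) are implicit in the paper but worth making explicit.
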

\begin{proof}
One can obtain a similar recurrence relation for $ \hat q_{\alpha}^{\hat\theta}(n)$ as before. This allows us to show that the following limit holds 
\begin{equation*}
\lim\limits_{n \to \infty} \frac{1}{n}\ln \o \frac{\hat q_{\alpha}^{\hat \theta}(n)}{\hat q_{\Upsilon}^{\hat \theta}(n)}\c=\sum_i  p(i|\Upsilon) \ln \o \frac{ p^{\hat\theta}(i|\alpha)}{ p^{\hat\theta}(i|\Upsilon)} \c.
\end{equation*}
The above expression can be written as follows
\begin{align*}
\lim\limits_{n \to \infty} \frac{1}{n}\ln \o \frac{\hat q_{\alpha}^{\hat \theta}(n)}{\hat q_{\Upsilon}^{\hat \theta}(n)}\c&=\sum_i  p(i|\Upsilon)\ln \o\frac{ p(i|\Upsilon)}{ p^{\hat\theta}(i|\Upsilon)}\c\\
&-\sum_i  p(i|\Upsilon)\ln \o\frac{ p(i|\Upsilon)}{ p^{\hat\theta}(i|\alpha)}\c\\
&=S(\P_{\Upsilon}|\P_{\Upsilon}^{\hat\theta}) - S(\P_{\Upsilon}|\P_{\alpha}^{\hat\theta}).
\end{align*}
Now if $\underset{\beta \in \mathcal P}{argmin~} S(\P_{\alpha}|\P_{\beta}^{\hat\theta})=\alpha$ for all $\alpha,$ then the above expression is  negative for any value $\alpha\neq\Upsilon,$ hence necessarily $q_\alpha^{\hat\theta}(n)$ tends to zero, and because these quantities sum to one, $q_\Upsilon^{\hat\theta}(n)$ converges to one.
\end{proof}
\section{Numerical study}
\label{sec:numeric}
In this section, we consider the experimental setup designed by Laboratoire Kastler-Brossel (LKB) at Ecole Normale Sup\'erieure (ENS) de Paris. Two situations are considered, first QND measurements in presence of imperfections, second we also take into account the decoherence due to the interaction of the system with the environment.  
\subsection{QND measurement: photon box example} 
We are interested in the evolution of a quantum electrodynamics cavity state considered in \cite{sayrin2011real}. The system corresponds to a quantized trapped mode inside the cavity whose state is described in the Fock basis $\ket {n}\bra {n}$ representing $n$ photons inside the cavity. The photon number states are the pointer states in this example, here we suppose that the Hilbert space is finite dimensional and we can have no more than $n^{\textrm{max}}$ photons inside the cavity. The state of the cavity is estimated through QND measurements. This is done by considering Rydberg atoms as the meter which are sent one by one inside the cavity and measured just after. The atom can be in the ground state $\ket g\bra g$ or excited state $\ket e\bra e.$ In this experiment, there are various sources of imperfection. The sample of atoms interacting with the cavity can be empty of atoms with probability $p_0,$ one atom with probability $p_1$ and two atoms with probability $p_2.$ Other sources of imperfection can be the efficiency of the detector denoted by $\epsilon_d$ which corresponds to the probability that the detector detects an atom. The final type of error corresponds to the possibility of a false detection result, we denote by $\eta_g$ (resp. $\eta_e$) the probability that the atom is detected in $g$ (resp. $e$) while the correct one is $e$ (resp. $g$). 
Formally, here we have $7$ possibilities for the detection result $i\in\{no,g,e,gg,ge,ee\}$ corresponding to have no atom, or one atom in the state $g$ or $e,$ or two atoms both in the state $g,$ one atom in $g$ and the other in $e$ or finally the possibility to have two atoms both on the state $e.$ The following expressions give the forms of the Kraus operators corresponding to each of such detection possibilities:
\begin{align*}
&V_{no}=\sqrt{p_0}I,\quad V_g=\sqrt{p_1}\cos\phi_N,\quad V_e=\sqrt{p_1}\sin\phi_N\\
&V_{gg}=\sqrt{p_2}\cos^2\phi_N,\quad V_{ge}=V_{eg}=\sqrt{p_2}\cos\phi_N\sin\phi_N,\\
&V_{ee}=\sqrt{p_2}\sin^2\phi_N,
\end{align*}
where $\phi_N=\frac{\phi_0(N+\frac{1}{2})+\phi_R}{2},$ with $\phi_0$ and $\phi_R$ corresponding to physical parameters and $N$ denoting the photon number operator.

The elements $\eta_{ij}$ of the correlation matrix for $i,j\in\{no,g,e,gg,ge,ee\}$ is determined by the  table shown in Figure \ref{fig:eta}. The system dynamics is then described by Equation \eqref{eq:dynamic}.

The values we used to simulate the true trajectory are $p_0=0.9$, $p_1=0.05$, $p_2=0.05$, $\phi_0=0.78$, $\phi_R=-0.44$. For the correlation matrix, $\epsilon_d=0.9$, $\eta_g=0.1,$ and $\eta_e=0.1$. The true initial state $\rho_0$ was chosen as a random pure state, and the estimated initial state $\hat\rho_0$ as the completely mixed state.


\vspace{-2cm}
\subsection*{Unknown initial state and parameters}
\vspace{-1cm}
Here we seek to numerically verify the asymptotic stability of trajectories whose initial states and physical parameters are unknown. For the stability purpose, we have seen that there is no real constraint on the estimated initial state (it is enough to choose it full rank). Concerning the estimated parameters, according to Theorem \ref{thm:entropy}, it is sufficient that Equation \ref{eq:argmin} holds true, that is to say that for any pointer state, the probability distribution generated with estimated parameter is closer to the distribution generated by the same pointer state with the true parameter.
For the sake of simplicity, we suppose that only the parameters $\phi_0$ and $\phi_R$ suffer from imprecision. More precisely, we compute numerically the relative entropy between $\P_\alpha$ and $\P_\beta^{\hat \theta}$ where $\hat\theta=(\hat\phi_0,\hat\phi_R)$ is an estimation of $\theta=(\phi_0,\phi_R).$ In Figure \ref{fig:entropy} we plot the parameters that verify the mentioned condition.
\begin{figure}[H]
\centering{\includegraphics[keepaspectratio=true,scale=0.5]{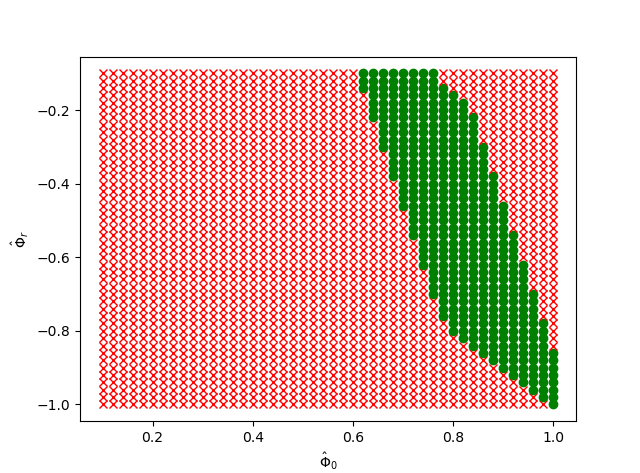}}
\caption{\footnotesize{ Validity of assumption \eqref{eq:argmin}. In green when such the assumption is verified; 
 in red when it is not. Maximum photon number is $n^{\textrm{max}}=4.$ True parameters are $\phi_0=0.78$ and $\phi_R=-0.44.$}}
\label{fig:entropy}
\end{figure}
Here is an example of the evolution of the populations in the pointer states, i.e. the quantities $q_\alpha(n)$ and $\hat q_\alpha(n)$, from $\alpha=0$ to $\alpha=4$.
\begin{figure}[H]
   \centering{\includegraphics[width=8cm,height=10cm]{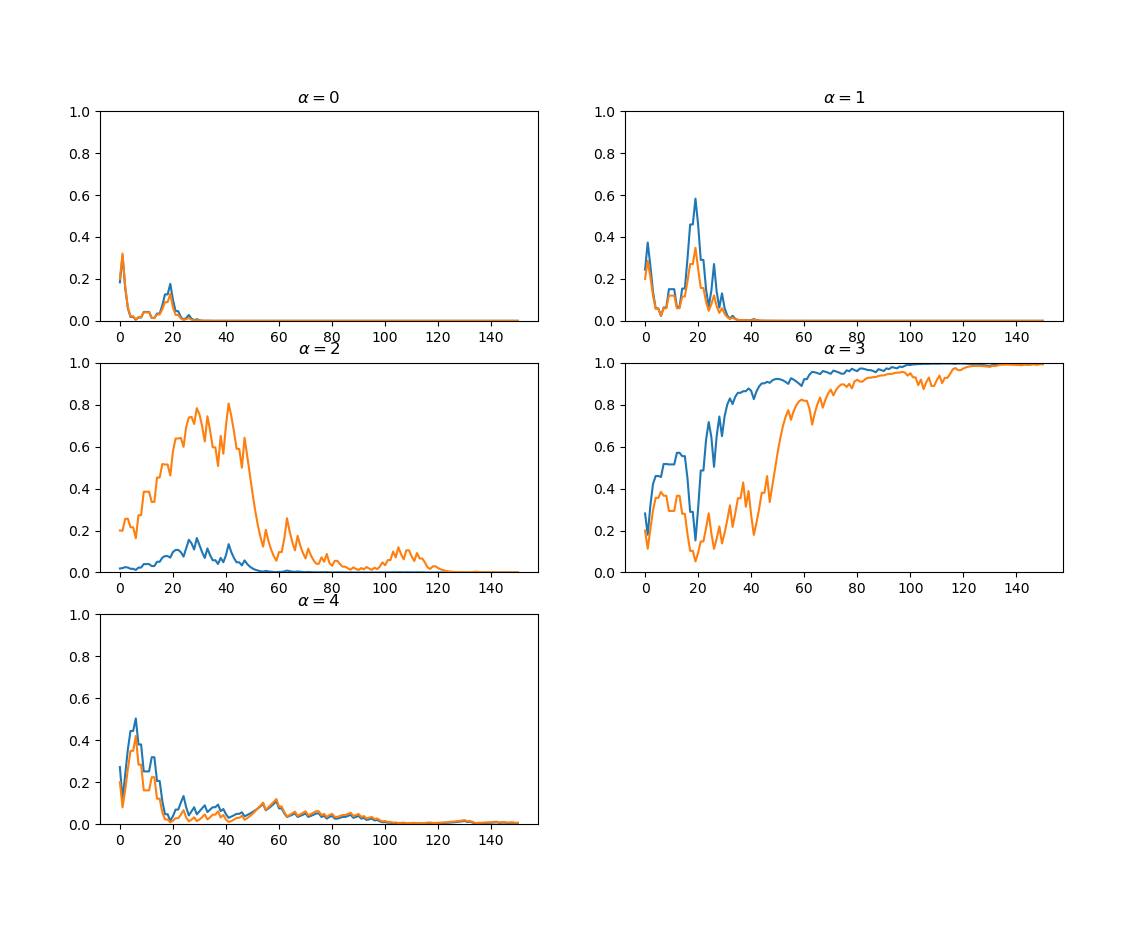}}
\caption{\footnotesize{Populations in the different pointer states over time. Blue for the true trajectory, yellow for the estimated trajectory. Estimated parameters are $\hat\phi_0=0.83$ and $\hat\phi_R=-0.40.$}}
\label{fig:population}
\end{figure}
In the following, similar to Figure \ref{fig:entropy}, we plot the parameters which verify the condition \eqref{eq:argmin} for a greater $n^{\textrm{max}},$ which shows that the region is now much narrower (the figure is enlarged for a better visibility).
\begin{figure}[ht]
\centering{\includegraphics[keepaspectratio=true,scale=0.5]{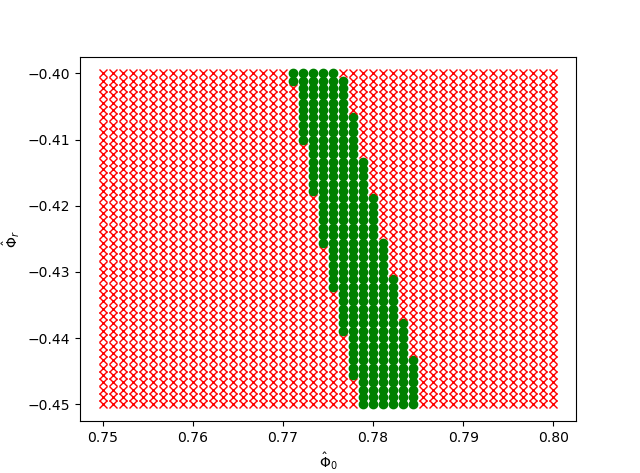}}
\caption{\footnotesize{ Validity of assumption \eqref{eq:argmin}. In green when such the assumption is verified; 
 in red when it is not. Maximum photon number is $n^{\textrm{max}}=9.$ True parameters are $\phi_0=0.78$ and $\phi_R=-0.44.$}}
\label{fig:entropy9}
\end{figure}
\subsection{Beyond the QND case: Photon box example taking into account the effect of decoherence}
In this section, we still consider the photon box example, however we now take into account the interaction between the cavity field and the environment, i.e., decoherence. The decoherence can be described by the action of the superoperator $T$ on the cavity state as follows 
$$T(\rho)=L_0\rho L_0^\dag+L_+\rho L_+^\dag+L_-\rho L_-^\dag,$$ where $L_0=I-\frac{\epsilon(1+2n_{th})}{2}N-\frac{\epsilon n_{th}}{2}I$ corresponds to the no-photon jump operator, $L_+=\sqrt{\epsilon (1+n_{th})}a$ means the capture of a photon from the environment and $L_-=\sqrt{\epsilon n_{th}}a^\dag$ represents the loss of a photon to the environment. The experimental parameters are $0\leq \epsilon, n_{th}\ll 1.$ The notations $a$ and $a^\dag$ correspond to the annihilation and creation operators respectively. 

The whole evolution can be described by $21$ Kraus operators which are in the form $L_dV_i$ with $i\in\{no,g,e,gg,ge,ee\}$ and $d\in\{0,+,-\}.$ However clearly these new Kraus operators do not satisfy the QND property. 

Here, we aim to study the asymptotic stability in presence of decoherence and imperfections through simulations.\\
This time, we cannot look at the populations in the pointer states because the Kraus operators are no longer QND. In this case, the true trajectory does not converge to a state in general. Nevertheless, we can consider the fidelity\footnote{The fidelity between two states $\rho$ and $\hat\rho$ are defined by $\mathcal{F}(\rho,\hat\rho) = \tr^2\sqrt{\sqrt{\rho}\hat\rho\sqrt{\rho}} \in [0,1].$}  between the true trajectory and the estimated one. 
 We assume that the parameters are known but we do not have access to the initial state. Figure \ref{fig:nonQND} represents the convergence of the fidelity between the true state $\rho_n$ and the estimated one $\hat\rho_n$ towards one. 
\begin{figure}[H]
   \centering{\includegraphics[scale=0.5]{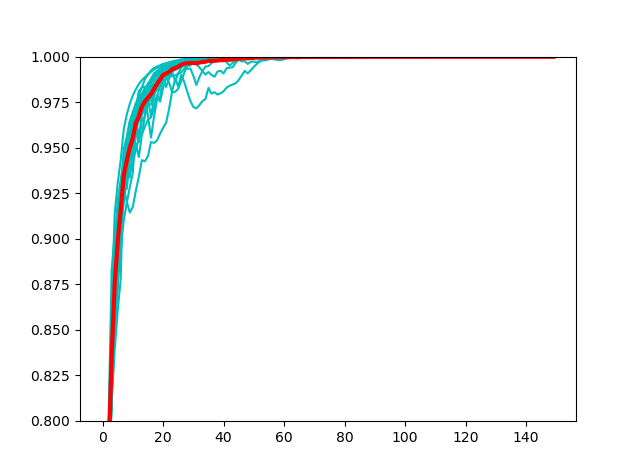}}
\caption{\footnotesize{Fidelity between the true trajectories and the estimated trajectories in presence of decoherence with unknown initial state. The red curve represents the mean value of 20 samples. Here $n^\textrm{max}=4.$}}
\label{fig:nonQND}
\end{figure}
In Figure \ref{fig:nonQNDpara},  we assume that in addition the parameters are unknown, so the estimated trajectory evolves with estimated parameters $\hat\phi_0, \hat\phi_R$. We observe that in this case the fidelity does not converge to one.  
\begin{figure}[H]
   \centering{\includegraphics[scale=0.5]{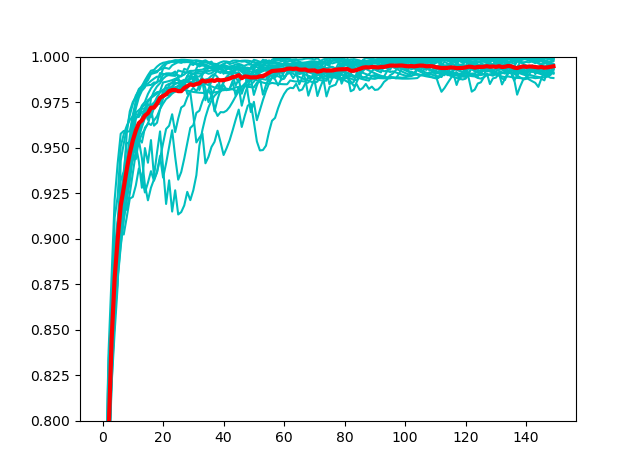}}
\caption{\footnotesize{Fidelity between the true trajectories and the estimated trajectories in presence of decoherence with unknown initial state and parameters. The red curve represents the mean value of 20 samples. Here $n^\textrm{max}=4,$ $\hat\Phi_0=0.83,$ and $\hat\Phi_R=-0.40.$}}
\label{fig:nonQNDpara}
\end{figure}
\section{Conclusion}
In this paper, we show that discrete-time quantum trajectories undergoing QND imperfect measurement are asymptotically stable with respect to initial state. This means the convergence of the estimated trajectories towards the true ones. Moreover, we consider the situation where the physical parameters are unknown and we provide a condition which ensures the asymptotic stability in this new case, not considered before in the literature as far as we know. Numerically, for the famous example of the photon box \cite{sayrin2011real}, we observe that in presence of decoherence and measurement imperfections, the estimated trajectories with arbitrary initial state converge towards the true trajectory with correct initial state even though the Kraus operators representing this situation do not satisfy QND property. It seems however that stability with respect to parameters is not ensured in general. These would be interesting questions to be further investigated.


\addtolength{\textheight}{-12cm}   



\begin{figure*}[thpb]
      \centering
\includegraphics{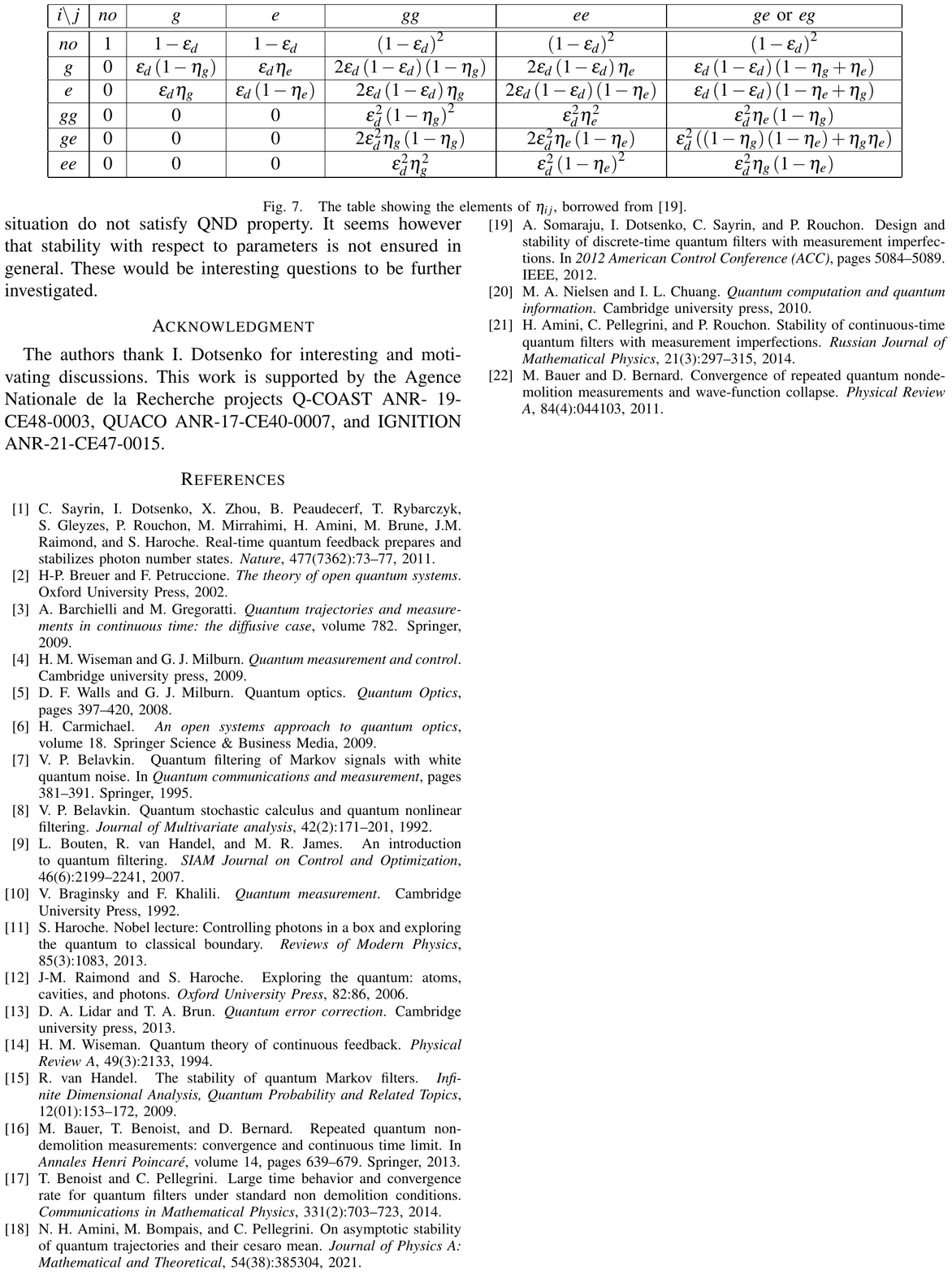}
\caption{The table showing the elements of $\eta_{ij}$, borrowed from \cite{somaraju2012design}.} 
\label{fig:eta}
\end{figure*}
\section*{Acknowledgment}
The authors thank Igor Dotsenko for interesting and motivating discussions. This work is supported by the Agence Nationale de la Recherche projects Q-COAST ANR- 19-CE48-0003, QUACO ANR-17-CE40-0007, and IGNITION ANR-21-CE47-0015. 

\bibliographystyle{unsrt}
\bibliography{refs.bib}

\end{document}